\newcommand{\mbb}{\mathbb}
\newcommand{\ts}{\ \, \;}
\newtheorem{mydef}{Definition}
\newtheorem{mythm}{Theorem}
\renewcommand*\env@matrix[1][\arraystretch]{%
  \edef\arraystretch{#1}%
  \hskip -\arraycolsep
  \let\@ifnextchar\new@ifnextchar
  \array{*\c@MaxMatrixCols c}}
\newenvironment{sciabstract}{%
\begin{quote} \bf}
{\end{quote}}
\title{Causal and self-dual morphisms in four complex dimensions}
\author{Edward B. Baker III\thanks{edwardbaker86@gmail.com}}
\date{\today}
\begin{document} 

\maketitle 

\begin{sciabstract}
We define a class of maps between holomorphically embedded null curves which generalize conformal transformations, and can be defined in any complex dimension.  In four dimensions, we can also define a similar map between self-dual surfaces, which generalize flat $\alpha$-planes.  These maps are respectively called causal and self-dual morphisms.  It is shown that there exist an infinite class of non-trivial examples for both types of maps in four dimensions.  
\end{sciabstract}

\section{Introduction}

In a previous paper\cite{baker2019frustrated}, it was shown that a certain equation in four dimensions, referred to as a frustrated conformal transformation, can generate holomorphic maps of an associated ambitwistor space.  Part of the motivation for these results was to generalize conformal transformations in a way that overcomes the rigidity of these functions in dimension greater than two, which follows from Liouville's theorem of conformal mappings.  Although a solution to this equation based on the BPST instanton was found, further investigation suggests that there may not be more general solutions, and therefore the equations may not have the flexibility desired from the initial motivation of the theory.

In this paper, we will take a different but related approach to the problem.  In many ways this approach is more straightforward, relying less on the quaternionic techniques introduced in the previous paper, and instead using an abstract formulation which can be investigated independently of coordinate system.  Furthermore, this approach leads to more general maps referred to as causal morphisms, and it will be shown that these do admit nontrivial solutions.

The idea of a causal morphism is to define a map that does not map points to points, but instead locally maps null surfaces to null surfaces. The idea is partially motivated by twistor theory, and its ambitwistor extension, where null lines and $\alpha$-planes are considered as fundamental objects of the theory\cite{ward1991twistor}\cite{dunajski2010solitons}.  In particular, conformal transformations on a self-dual 4-manifold induce holomorphic transformations on the corresponding twistor space\cite{10.2307/79638}.  This result directly motivates the idea of a self-dual morphism, and the extension to ambitwistor space results in the definition of causal morphisms.  

The maps defined here could be useful for a number of physical and mathematical applications.  A particularly interesting example is that of Yang-Mills theory, which is conformally invariant in four dimensions at the classical level.   It will be interesting to analyze the transformation properties for Yang-Mills fields under these maps, given their relationship to conformal transformations.  Furthermore, the inherent causal behavior of these maps suggests additional applications for physics, such as symmetries of general relativity and other physical theories, or generalized conformal structures on manifolds. Finally, it will be interesting to further understand the algebraic and analytic behavior of the solutions to these equations.  All of these applications will be left for future investigation.  

\section{Definitions and initial construction}

We start with a complex manifold $\mathcal{M}$ of dimension $n$ with a conformal structure that contains null tangent vectors. In this paper we will be interested primarily in $M=\mbb{C}^4$, which is also referred to as complexified Minkowski space, but some definitions can be made for more general manifolds. Consider a holomorphically embedded null surface $\chi(s):\mbb{C}\rightarrow \mathcal{M}$ which is non-singular, so that $\dot{\chi}$ is a non-zero null vector.  A causal morphism $f$ is a map on such embeddings, inducing a new embedding $f\circ \chi (s)$ which is also null.  We will demand that $f$ is holomorphic, in a sense to be defined, and that it is local.  In this context, we define locality by
\begin{mydef}
A map on embeddings $f$ is local if for any two holomorphic embeddings $\chi(s)$ and $\psi(t)$ that are tangent at points $s_0$ and $t_0$, the induced embeddings $f\circ \chi(s)$ and $f\circ \psi(t)$ are also tangent at $s_0$ and $t_0$. 
\end{mydef}  
The locality condition allows for a simpler characterization of the maps. To see this, we first define
\begin{mydef}
The ambitwistor correspondence space $(\mathcal{G},p,\mathcal{M})$ is the projective bundle of null tangent vectors over $\mathcal{M}$.  The fiber at a point $x\in \mathcal{M}$ is given by 
\begin{equation}
\mathcal{G}_x=\{v\in T_x \mathcal{M} | <v,v>=0, v\neq 0, v\sim \lambda v \ \forall \lambda \in \mbb{C}^*\},
\end{equation}
and $p$ is the natural projection operator.
\end{mydef}
 A nonsingular null embedding $\chi(s)$ induces an embedding $\chi_1(s):\mbb{C}\rightarrow \mathcal{G}$ given by $\chi_1(s)=(\chi(s),\dot{\chi}(s))$, where the second coordinate is defined up to scalar multiplication.  A map on embeddings $f$ can then be constructed from a map $f_1:\mathcal{G}\rightarrow \mathcal{G}$ by demanding that $f\circ \chi(s)=p\circ f_1\circ \chi_1(s)$.   For $f$ to define a causal morphism, $f_1$ must satisfy a consistency condition.  Writing $f_1\circ \chi_1(s)=(f\circ \chi(s), v\circ \chi_1(s))$, we require that
\begin{equation}
\lambda \frac{\partial}{\partial s} (f\circ \chi(s))= v\circ \chi_1(s), 
\end{equation} 
which must be satisfied for any $\chi(s)$, and some nonzero $\lambda$.  This will be referred to as the tangent consistency condition.  The locality condition guarantees that any causal morphism can be constructed from such an $f_1$.  Furthermore, the space $\mathcal{G}$ has a natural complex structure, so the holomorphicity of $f$ can be defined by the holomorphicity of $f_1$.  We can therefore equivalently define a causal morphism as a holomorphic map $f_1:\mathcal{G}\rightarrow \mathcal{G}$ satisfying the tangent consistency condition.  This is all summarized by the definition\footnote{This definition holds for all $\mathcal{G}$, but could be generalized by considering a subset $U\in\mathcal{G}$ as the domain.}
\begin{mydef}
\label{def:cmorph}
A causal morphism of $\mathcal{M}$ is a map $f$ on nonsingular null curves in $\mathcal{M}$.  This map is required to be local and holomorphically varying in the sense defined above.  The map can also be characterized by a holomorphic endomorphism $f_1:\mathcal{G}\rightarrow \mathcal{G}$ satisfying the tangent consistency condition. 
\end{mydef}

We will now focus our attention on the case where $M$ is complexified Minkowski space.  In this case, we can also consider morphisms of self-dual surfaces, which are defined to be holmorphically embedded two-surfaces with $\alpha$-plane tangents at each point.  In this case, locality can be defined similarly, with tangent lines being replaced with $\alpha$-planes.  Furthermore, we can define the twistor correspondence space $(\mathcal{F},p,M)$, whose fiber is the set of $\alpha$-planes passing through a point $x\in M$, parameterized by a projective spinor $\pi_{A'}$ at $x$.  The tangent consistency condition can also be readily generalized with tangent $\alpha$-planes. This leads us to the definition
\begin{mydef}
A self-dual morphism of $M$ is a map $f$ on nonsingular self-dual surfaces in $M$,  required to be local and holomorphically varying.  These maps can be characterized by holomorphic endomorphisms of the twistor correspondence space $f_1:\mathcal{F}\rightarrow\mathcal{F}$ satisfying the tangent consistency condition.
\end{mydef} 

It turns out that there is a relatively straightforward construction which can be used to find an infinite class of non-trivial examples of self-dual morphisms.  A further construction can then be used to give examples of causal morphisms in four complex dimensions.   

For the construction of self-dual morphisms, we will first review a correspondence introduced by Shaw\cite{Shaw_1986}, where holomorphically embedded null surfaces $\chi$ were shown to correspond uniquely to maps into the twistor space of $M$, which is $CP^3$.\footnote{More precisely the twistor space is $CP^3-CP^1$ for non-compact $M$.  For our purposes, however, this distinction will not be relevant, because the constructions are also valid for compactified $M$.}  This correspondence can be described geometrically by defining a map $\kappa$ which sends a null embedding $\chi:\mbb{C}\rightarrow M$ at a point $\chi(s)$ to the unique $\alpha$-plane $Z\in CP^3$ which passes through $\chi$ and contains the null vector $\dot{\chi}$ in its tangent space. This yields a new embedding $z:\mbb{C}\rightarrow CP^3$ defined by $z=\kappa\circ \chi$. The map $\kappa$ can also be inverted for a given curve $z(s)$ in a non-singular region to give a null curve $\chi=\kappa^{-1}\circ z$, so one can treat this as a correspondence.  The singular cases are studied in more depth by Shaw, and further details of this correspondence will be presented in the next section.  It should also be noted that the same construction can be made for dual-twistors, yielding similar examples for anti-self-dual (ASD) morphisms.

Now consider a holomorphic endomorphism $\tilde{f}:CP^3\rightarrow CP^3$.  These maps are well understood for any projective space $CP^k$, and are given by polynomial self-maps $F=(F_0,\cdots, F_{k})$ on $\mbb{C}^{k+1}$ such that $F^{-1}(0)={0}$ and the components $F_i$ are homogeneous polynomials of the same degree $d\geq 1$ (see e.g. \cite{Dinh2010}).  Given such a map, it could be possible to construct a causal morphism defined by $f \circ \chi=\kappa^{-1}\circ \tilde{f}\circ \kappa \circ \chi$.  This construction is summarized by the diagram
\begin{equation} \label{eq:fdef}
\hspace*{-2cm} 
\begin{tikzcd}
 & CP^3 \arrow[r, "\tilde{f}"] \arrow[d,leftarrow, "\kappa"]
& CP^3 \arrow[d, "\kappa^{-1}"] \\
\mbb{C}\arrow[r, "\chi"] & M \arrow[r, "f"]
& M
\end{tikzcd}
\end{equation}

This indeed defines a holomorphically varying map on null surfaces, because all of the maps involved are holomorphic.  The maps constructed in this way, however, do not necessarily satisfy the locality condition.  It will turn out, however, that this construction does yield self-dual morphisms, because the maps are local with respect to $\alpha$-plane tangents. We will prove this result in the next section, and then we will calculate some examples of self-dual morphisms generated in this way.
 
 \section{Self-dual morphisms}
 
To prove locality of the above construction, and thus provide examples of self-dual morphisms, we first review some relevant details of the Shaw correspondence.    Given a null embedding $\chi(s):\mbb{C}\rightarrow M$, the point $z(s)=\kappa\circ \chi(s)$ is given by the unique $\alpha$-plane satisfying
\begin{align}
\label{eq:omegaeq}
\begin{split}
\omega^A=i\chi^{AA'}\pi_{A'} , \ \ \  
\dot{\chi}^{AA'}\pi_{A'}=0, 
\end{split}
\end{align}
where $z^\alpha=(\omega^A,\pi_{A'})$ are homogeneous spinor coordinates for the point $z\in CP^3$, and the dependence on $s$ is suppressed.  The first equation says that $z$ contains $\chi$, and the second that it contains the null tangent $\dot{\chi}$, which can be written in the form 
\begin{equation}\label{eq:nulltangent}
\dot{\chi}^{AA'}=\lambda^A\pi^{A'}
\end{equation}
for some $\lambda^A$.  We can also differentiate equation \eqref{eq:omegaeq} to find
\begin{equation}\label{eq:zdoteq1}
\dot{\omega}^A=i\chi^{AA'}\dot{\pi}_{A'}.
\end{equation}

Conversely, given a non-singular map $z:\mbb{C}\rightarrow CP^3$, the above equations yield an inverse map $\chi=\kappa^{-1}\circ z$ given by 
\begin{equation}\label{eq:inversemap}
i\chi^{AA'}=(\omega^A\dot{\pi}^{A'}-\dot{\omega}^{A}\pi^{A'})/(\pi_{C'}\dot{\pi}^{C'}).
\end{equation}
To define a self-dual morphism, we consider a family of curves passing through a given point where $\lambda^A$ in equation \eqref{eq:nulltangent} is allowed to vary.  These curves are all tangent to the $\alpha$-plane defined by $z$.  The image of these curves under $f$ then defines a new family of curves, which define a new $\alpha$-plane in the target space, which gives

\begin{mythm}
The map $f\circ \chi$ defined by  diagram \eqref{eq:fdef} gives rise to a holomorphic endomorphism $f_1:\mathcal{F}\rightarrow \mathcal{F}$, and therefore defines a self-dual morphism.
\end{mythm}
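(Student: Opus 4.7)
My plan is to define $f_1:\mathcal{F}\to\mathcal{F}$ directly in terms of $\tilde f$ and the Shaw correspondence, and then verify that for every null curve $\chi$ through the given point $(x,\pi_{A'})$ the diagram \eqref{eq:fdef} reproduces $f_1$ at $s=0$. Starting from $(x,\pi_{A'})\in\mathcal{F}$, I would first form $Z=(\omega^A,\pi_{A'})$ with $\omega^A=ix^{AA'}\pi_{A'}$ and set $\tilde Z=\tilde f(Z)\in CP^3$, with homogeneous coordinates $(\tilde\omega^A,\tilde\pi_{A'})$. The spinor $\tilde\pi_{A'}$ depends only on $(x,\pi_{A'})$ and plays the role of the $\alpha$-plane label at the image point; what remains is to produce a corresponding point $\tilde x\in M$ on the $\alpha$-plane $\tilde Z$ so that $f_1(x,\pi_{A'}):=(\tilde x,\tilde\pi_{A'})$.

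The natural candidate is $\tilde x:=(f\circ\chi)(0)$ for any null curve $\chi$ with $\chi(0)=x$ and $\dot\chi^{AA'}\pi_{A'}=0$; the key point is to show that this is independent of the choice of $\chi$. Differentiating \eqref{eq:omegaeq} and using the $\alpha$-plane constraint yields $\dot\omega^A(0)=ix^{AA'}\dot\pi_{A'}(0)$, so the $1$-jet $\dot z(0)$ is forced to lie in the $2$-dimensional linear subspace $V=\{(ix^{AA'}\sigma_{A'},\sigma_{A'}):\sigma_{A'}\in\mbb{C}^2\}\subset\mbb{C}^4$. Because $V$ already contains the homogeneous representative $z$ of $Z$ (choose $\sigma_{A'}=\pi_{A'}$), its image in $T_ZCP^3$ collapses to a single projective line $\ell$ depending only on $(x,\pi_{A'})$ --- geometrically, the tangent at $Z$ to the Klein line in $CP^3$ associated with $x$. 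A short algebraic check confirms that the inverse formula \eqref{eq:inversemap} is invariant under both $\dot z\mapsto c\dot z$ and $\dot z\mapsto\dot z+\mu z$, so $\kappa^{-1}(\tilde z,\dot{\tilde z})$ depends on $\dot{\tilde z}=d\tilde f|_Z\cdot\dot z$ only through the image line $d\tilde f|_Z(\ell)\subset T_{\tilde Z}CP^3$, and therefore only on $(x,\pi_{A'})$. This pins down $\tilde x$ independently of $\chi$, and $f_1$ is holomorphic as a composition of the holomorphic maps $\tilde f$, $\kappa$, and $\kappa^{-1}$.

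The tangent consistency condition will then come for free: by the Shaw correspondence, $(f\circ\chi)(s)=\kappa^{-1}(\tilde z(s),\dot{\tilde z}(s))$ is a null curve whose tangent at each $s$ lies in the $\alpha$-plane with spinor $\tilde\pi_{A'}(s)$, so $\dot{(f\circ\chi)}^{AA'}\tilde\pi_{A'}=0$ is automatic. I expect the main obstacle to be the collapse of the $2$-parameter family of possible $\dot z(0)$ to a single line in $T_ZCP^3$ --- this is precisely what turns the apparently curve-dependent construction \eqref{eq:fdef} into a genuine map on $\mathcal{F}$, and is what will deliver the locality of the resulting self-dual morphism.
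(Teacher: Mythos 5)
Your proposal is correct and follows essentially the same route as the paper's proof: both reduce the question to the dependence of the inversion formula \eqref{eq:inversemap} on the $1$-jet $\dot z$, observe that \eqref{eq:zdoteq1} confines $\dot z$ to a two-dimensional subspace containing the scaling direction $z$ itself, and then use homogeneity of $\tilde f$ together with the invariance of \eqref{eq:inversemap} under $\dot z\mapsto c\dot z$ and $\dot z\mapsto\dot z+\mu z$ to conclude that $\xi$ depends only on $(x,\pi_{A'})$. The only cosmetic difference is that you package the collapse as a single projective line in $T_ZCP^3$ pushed forward by $d\tilde f|_Z$, whereas the paper argues directly on the homogeneous representatives; the content is identical.
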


\begin{proof}
Consider a null curve $\chi$ that is tangent to the $\alpha$-plane $z^\alpha=(\omega^A,\pi_{A'})$ at a point $\chi(s)$.  
We want to show that $\xi=f\circ \chi$ and its $\alpha$-tangent only depend on the point $(\chi,\pi)\in\mathcal{F}$ and not on any higher derivatives of $\chi$.  To fix notation, we also define the map $y=\tilde{f}\circ z$. Note that once the point $\xi$ is shown to be local, the $\alpha$-tangent is fully determined by the point $y$, so it will suffice to prove locality for $\xi$.  The inversion formula \eqref{eq:inversemap} shows that $\xi$ only depends on $(y,\dot{y})$, which in turn only depends on $(z,\dot{z})$.  So we will begin by maximally constraining $(z,\dot{z})$ using the point $(\chi,\pi)\in \mathcal{F}$.

We first note that $z$ is fully constrained by $(\chi,\pi)$, but equation \eqref{eq:zdoteq1} does not fully constrain $\dot{z}$.  However, it does constrain it to a two complex dimensional subspace of the full tangent space of $C^4$.  This subspace contains the solution $\dot{z}=r z$ for $r\in \mbb{C}^*$, which is not part of the tangent space to $CP^{3}$, as it leads to a rescaling. Because the map $\tilde{f}$ is homogeneous of degree $d$, this vector is mapped to a similar rescaling vector for $y$, which in turn leaves the inversion formula \eqref{eq:inversemap} invariant.  Any linear combination $(z,\dot{z}+r z)$ is then mapped to the same $\xi$.  Furthermore, any rescaling of $\dot{z}$ leads to a rescaling of $\dot{y}$, which also leaves the inversion formula invariant.  Therefore, any vector $\dot{z}$ in the two dimensional subspace defined by equation \eqref{eq:zdoteq1} that is not proportional to $z$ will lead to the same $\xi$, which therefore only depends on the point $(\chi,\pi)$.  The spinor $\tilde{\pi}$ defining the $\alpha$-plane tangent at $\xi$ is then determined directly by $y$.

We have thus shown that the map $\xi$ is local and constrained to an $\alpha$-plane $(\xi,\tilde{\pi})$ under the image of $f$.   Considering different curves tangent to $z$ at $\chi$ then yields a holomorphic endomorphism $f_1:\mathcal{F}\rightarrow \mathcal{F}$ given by $f_1(\chi,\pi)=(\xi,\tilde{\pi})$, and therefore defines a self-dual morphism.
\end{proof}

We will now calculate some examples of self-dual morphisms, with details given in appendix \ref{sec:AppA}. The simplest case is when $\tilde{f}$ is a projective transformation. In this case, there is a matrix $F$ that acts on $z$ to give $y=Fz$, with derivative $\dot{y}=F\dot{z}$.  Given $(\chi,\pi)\in\mathcal{F}$, the point $z=\kappa\circ \chi$ is determined by equation \eqref{eq:omegaeq}.  To find $\dot{z}$, we choose any spinor $\psi$ that is linearly independent from $\pi$, and use equation \eqref{eq:zdoteq1} to write $\dot{z}=(i\chi\psi,\psi)$. The choice of $\psi$ determines the vector in the two-dimensional subspace from the proof, which will not affect the final result.  Writing $F$ in block form, 
\begin{equation}\label{eq:Fblock}
F=
\begin{pmatrix}
A^A_{\ts B} & B^{AB'} \\
C_{A'B} & D_{A'}^{\ts B'} 
\end{pmatrix},
\end{equation}
equation \eqref{eq:inversemap} gives
\begin{equation}
\label{eq:genmob}
i\xi^{AA'}=(iA\chi+B)^{AB'}(iC\chi+D)^{-1 \ts A'}_{\ts \, B'},
\end{equation}
and $\tilde{\pi}$ is determined by $y=Fz$. In this case, $\xi$ is independent of $\pi$, and $f$ takes the form of a linear fractional transformation, or generalized M\"{o}bius transformation. This is encouraging, because it shows that these maps reduce to conformal transformations in the simplest case.  

In the case where $\tilde{f}$ is a degree two map, there are symmetric matrices $F$, $G$ such that $y=(z^T F z, z^T G z)$.  The derivative is then given by $\dot{y}=2(z^T F \dot{z}, z^T G \dot{z})$, and we again choose $\psi$ so that $\dot{z}=(i\chi \psi,\psi)$.  Writing the matrices in block form
\begin{equation*}
F^A=
\begin{pmatrix}[1]
[A^A]_{BC} & [B^{A}]_{B}^{\ts C'} \\
[B^{AT}]^{C'}_{\ts B} & [C^A]^{A'B'} 
\end{pmatrix}, 
\ G_{A'}=\begin{pmatrix}[1]
[D_{A'}]_{BC} & [E_{A'}]_{B}^{\ts C'} \\
[E^T_{A'}]^{C'}_{\ts B} & [F_{A'}]^{A'B'} 
\end{pmatrix}, 
\end{equation*}
we find 
\begin{equation}\label{eq:sdtwo}
i\xi^{AA'}=-M^{AB'}(N^{-1})^{\ts A'}_{B'},
\end{equation}
where we have defined the matrices
\begin{align}
\label{eq:mndef}
\begin{split}
&M^{AB'}=(\chi^{CC'}A^A_{\ts CB}\chi^{BB'}-i\chi^{CC'}B^{A\ts B'}_{\ts C}-iB^{A\ts C'}_{\ts B}\chi^{BB'}-C^{AC'B'})\pi_{C'}, \\
&N_{A'}^{\ts B'}=(\chi^{CC'}D_{A'CB}\chi^{BB'}-i\chi^{CC'}E_{A'C}^{\ts \ts B'}-iE^{\ts \ts C'}_{A'B}\chi^{BB'}-F_{A'}^{\ts C'B'})\pi_{C'}.
\end{split}
\end{align}
This expression has a form similar to equation \eqref{eq:genmob}, with quadratic factors instead of linear, and it does depend explicitly on $\pi$.  It is the simplest example of a self-dual morphism that is not a conformal transformation, and is useful for illustrating the general ideas behind the construction.  

\section{Causal morphisms}

As mentioned previously, the construction used for self-dual morphisms does not work for causal morphisms, because the resulting map does not necessarily satisfy the locality condition.  This is because the inversion formula for $\lambda$ from equation \eqref{eq:nulltangent} takes the form
\begin{equation}
 i\lambda^A=(i\chi^{AA'}\ddot{\pi}_{A'}-\ddot{\omega}^A)/(\pi_{C'}\dot{\pi}^{C'}),
\end{equation} 
depending explicitly on second derivatives of $z$.  Unlike the case with self-dual morphisms, these second derivatives are not adequately constrained by $(\chi,\dot{\chi})\in\mathcal{G}$ to define a map $f_1:\mathcal{G}\rightarrow \mathcal{G}$.

There is a way, however, to find examples of causal morphisms.  To do this, we first define the biquaternion projective space as follows
\begin{mydef}
The biquaternion projective space $BP^n$ is defined by a set of homogeneous coordinates $[b_0,\ldots,b_n]$ for $b_i\in M_2(\mbb{C})$ which are identified under right multiplication $[b_0 u,\ldots,b_n u]$ for $u\in GL_2(\mbb{C})$, and it is required that at least one of the $b_i$ is invertible.  
\end{mydef} 
This definition makes use of the fact that the biquaternions are isomorphic to $M_2(\mbb{C})$.  As with the quaternion projective space, this definition can equivalently be made using left multiplication, which is related by a change in orientation.  Some details and conventions used in this section are included in appendix \ref{sec:AppB}.

For our purposes we will mainly be interested in $BP^1$ with homogeneous coordinates $[b_0,b_1]$, which is the relevant space for four complex dimensions.  We will generally be interested in defining an $\alpha$-plane $z$ that passes through $b_0$, and a $\beta$-plane $w$ that passes through $b_1$.  In a local coordinate patch, this will define an $\alpha$-plane and a $\beta$-plane that intersect, and therefore defines a null line.  To see this explicitly, 
consider the coordinate patch $U_1$ where $b_1$ is invertible, and define the coordinate $x=b_0 b_{1}^{-1}$. The points on the $\alpha$-plane through $b_0$ can be written $b_z^{AA'}=b_0^{AA'}+\delta^A\pi^{A'}$, where $\delta$ is allowed to vary, and $b_1$ is held fixed.  In $U_1$, these points are given by 
\begin{equation}
x_z^{AA'}=x^{AA'}+\delta^A(\pi b_1^{-1})^{A'},
\end{equation}
which defines an $\alpha$-plane with coordinates $z_U=(\omega,\pi b_1^{-1})$ passing through $x$.  

Similarly, we can consider a $\beta$-plane with coordinates $w_{\alpha}=(\lambda_A,\mu^{A'})$ passing through $b_1$, which satisfies
\begin{align}
\label{eq:mueq}
\begin{split}
\mu^{A'}=-i\lambda_A b_w^{AA'}, \ \ \ 
\lambda_A \dot{b}_w^{AA'}=0.
\end{split}
\end{align}
The points on the $\beta$-plane can be written $b_w^{AA'}=b_1^{AA'}+ \lambda^A\delta^{A'}$.  In $U_1$, this plane becomes curved.  However, to first order in $\delta$ we find\footnote{See appendix \ref{sec:AppB} for some technical details of this calculation.}
\begin{equation}\label{eq:betapatch}
x_w^{AA'}=x^{AA'}+(x\lambda)^A(\delta b_1^{-1})^{A'},
\end{equation}
which defines a $\beta$-plane that is tangent to the curve at $x$, and has coordinates $w_U=(x\lambda, \mu)$.  The intersection of these $\alpha$ and $\beta$ planes in $U_1$ defines a null tangent vector 
\begin{equation}
v^{AA'}=(x\lambda)^A(\pi b_1^{-1})^{A'}
\end{equation} at $x$, and therefore a point $(x,v)$ in the correspondence space $\mathcal{G}$.  Naturally, a similar calculation can be done for the coordinate patch $U_0$, but the null tangent is defined independently of coordinates.

To further understand this construction, we must consider how these planes transform under the right action $(b_0,b_1)\rightarrow (b_0u,b_1u)$.  Under this transformation,  the $\alpha$ and $\beta$-planes defined above transform as 
\begin{equation}\label{eq:equivtrans}
(\omega,\pi)\rightarrow (\omega,u^{-1}\pi), \ \ \ (\lambda,\mu)\rightarrow (\lambda,\mu u).
\end{equation} 
These transformations are an extension of the equivalence relation $(b_0,b_1)\sim (b_0u,b_1u)$, and are necessary for the point in $\mathcal{G}$ to be well defined on the projective space.

This brings us to the construction of causal morphisms.  A causal morphism can be characterized by a holomorphic map $f_1:\mathcal{G}\rightarrow \mathcal{G}$.  Based on the above discussion, for the projective space this is equivalent to a map on $\alpha$-planes through $b_0$, and $\beta$-planes through $b_1$, which transform according to equation \eqref{eq:equivtrans} under the right action $(b_0,b_1)\rightarrow (b_0u,b_1u)$. We will therefore be interested in finding endomorphisms of $CP^3\times CP^3$ that preserve the desired equivalence relations.  Given such an endomorphism, diagram \eqref{eq:fdef} and its anti-self-dual analogue can be used to define self-dual and anti-self-dual morphisms on $b_0$ and $b_1$, which should lead to a causal morphism.  Because the inverse maps depend on the derivatives $\dot{z}$ and $\dot{w}$, however, one must also consider the transformation properties of the tangent spaces to make sure the maps are well defined.

In this paper we will not attempt to find the most general endomorphisms that satisfy the correct transformation properties.  Instead, we note that if the endomorphism is invariant under the right action, then this construction leads to a causal morphism.  In this case the map $\tilde{f}\circ \kappa \circ \chi$ from diagram \eqref{eq:fdef} and its anti-self-dual analogue do not depend on the equivalence classes chosen in the intermediate steps, and so the map must be well defined. Therefore, we will want to find endomorphisms of $CP^3\times CP^3$ that are invariant under the transformations \eqref{eq:equivtrans}. 

To construct such  endomorphisms, we first note that $\omega$, $\lambda$, and $\mu \cdot \pi$ are invariant under \eqref{eq:equivtrans}.  Therefore, any transformation constructed from these quantities will also be invariant. 
Furthermore, we require that any scale transformations on either $z$ or $w$ induce scale transformations on the target space, where $z=(\omega,\pi)$ and $w=(\lambda,\mu)$ are treated as vectors in $\mbb{C}^4$.  Writing the endomorphism as $(z,w)\rightarrow (\tilde{z},\tilde{w})$, this implies that components of $\tilde{z}$ have the same degree of homogeneity in both $z$ and $w$, and similarly for $\tilde{w}$.  Finally, analogous to the case with holomorphic endomorphisms of $CP^3$, we demand that the inverse image of any  projectively ill-defined sets, including $(0,\tilde{w})$ and $(\tilde{z},0)$ and points where both $\tilde{z}$ and $\tilde{w}$ are zero divisors, are also ill-defined. Endomorphisms constructed in this way will satisfy the correct transformation properties to define a causal morphism.   

For an example of this construction, consider the endomorphism defined by 
\begin{equation}
\begin{matrix}[1.3]
(\omega^A,\pi_{A'}) \\(\lambda_A,\mu^{A'})
\end{matrix} \
\rightarrow \
\begin{matrix}[1.3]
(A^A_{\ts BC}\,\omega^B\lambda^C+B^A\, \mu\cdot \pi,  \ C_{A'BC}\,\omega^B\lambda^C+D_{A'}\, \mu\cdot \pi)\\
(E_{ABC}\,\omega^B\lambda^C+F_A\, \mu\cdot \pi,  \ G^{A'}_{\ts BC}\,\omega^B\lambda^C+H^{A'}\, \mu\cdot \pi).
\end{matrix}
\end{equation}
Using the procedure for constructing self-dual morphisms outlined in the previous section, and its anti-self-dual analogue, this gives rise to the map
\begin{equation}
\begin{matrix}[1.3](b_0^{AA'},\pi_{A'}) \\(\lambda_{A}, b_1^{AA'})\end{matrix} \ 
\rightarrow
 \ \begin{matrix}[1.3]
 ((iA\lambda b_0+B\mu )^{AB'}(iC\lambda b_0+D\mu)^{-1\ts A'}_{\ts B'}, \, C_{A'BC}\,\omega^B\lambda^C+D_{A'}\, \mu\cdot \pi) \\
(E_{ABC}\,\omega^B\lambda^C+F_A\, \mu\cdot \pi, \, -(E\omega -i F\pi b_1 )^{-1 \ts A}_{\ts C}(G\omega-H\pi b_1)^{A'C} ),
\end{matrix}
\end{equation}
where natural contractions are implied.
Based on the discussion above, this gives rise to a causal morphism.  Restricting to the coordinate patch $U_1$ for both spaces yields a map $f_1:\mathcal{G}\rightarrow \mathcal{G}$ as in definition \ref{def:cmorph}.

\section{Conclusion}

In this paper we have defined a class of transformations which map null surfaces to null surfaces in four complex dimensions, and which contain the group of conformal transformations as a special case.  These transformations are more conveniently characterized by mappings on the null tangent bundle, and preserve the causal structure of the space, even though they do not preserve points.  The natural relationship to causality and conformal structures makes these maps potentially useful for both mathematical and physical applications.

We have constructed examples of these transformations by using the correspondence between null surfaces in $CM$ and holomorphic maps into twistor space investigated by Shaw\cite{Shaw_1986}.  This led to explicit expressions for self-dual morphisms involving degree one and degree two endomorphisms of twistor space.  The degree one endomorphisms are  generalized M\"{o}bius transformations, and the degree two endomorphisms are quadratic rational functions.   A further construction based on biquaternion projective space was then introduced to generate examples of causal morphisms as well.  

It will be interesting to further understand the properties of these transformations, and also consider their potential applications.  The preservation of causal structure make them a natural candidate for a number of physical applications, including applications to Yang-Mills theory and general relativity.  The relationship to conformal transformations also makes them interesting from a purely mathematical point of view, particularly for the study of four dimensional manifolds.  We look forward to further understanding the properties and applications of these transformations in future work.

\appendix

\section{Details of self-dual morphisms} \label{sec:AppA}

Here we present a more detailed calculation for the examples of self-dual morphisms.  It was shown that when $\tilde{f}$ is a projective transformation, we have $\tilde{f}_1(z,\dot{z})=(Fz,F\dot{z})$.  
Using the block form of F given by equation \eqref{eq:Fblock} and the inversion formula \eqref{eq:inversemap}, we find
\begin{equation}\label{eq:yfirst}
i\xi^{AA'}=\dfrac{2\, G^{AB'}H^{A'C'}\pi_{[B'}\psi_{C']}}{\varepsilon^{D'E'}H_{D'}^{\ts F'}H_{E'}^{\ts G'}\pi_{F'}\psi_{G'}}.
\end{equation}
where we have defined the matrices
\begin{align*}
G^{AB'}=iA_{\ts B}^A\chi^{BB'}+B^{AB'}, \ \ H^{A'B'}=iC^{A'}_{\ts B}\chi^{BB'}+D^{A'B'}.
\end{align*}
Here, the numerator can be simplified by using the identity
\begin{equation}\label{eq:antisymident}
2\pi_{[B'}\psi_{C']}=\pi_{D'}\psi^{D'}\varepsilon_{B'C'},
\end{equation}
which is true for any spinors $\pi$ and $\psi$.  Furthermore, we note that the denominator is invariant under antisymmetrizing indices $F'$ and $G'$.   Antisymmetrizing and using equation \eqref{eq:antisymident}, the denominator can then be written $\det(H^{C'}_{\ts D'})\pi_{B'}\psi^{B'}$, where we have also used the identity 
\begin{equation*}
2\det(H^{B'}_{\ts D'})=-\varepsilon_{B'C'}\varepsilon^{D'E'}H^{B'}_{\ts D'}H^{C'}_{\ts E'}.
\end{equation*}
We can then use the identity $(H^{-1})_{A'}^{\ts B'}=-H_{\ts A'}^{B'}/\det(H^{C'}_{\ts D'})$ to find
\begin{equation}
i\xi^{AA'}=G^{AB'}(H^{-1})_{B'}^{\ts A'},
\end{equation}
which is equation \eqref{eq:genmob}.

For the second order case, the expression obtained from a direct use of equation \eqref{eq:inversemap} is
\begin{equation}
i\xi^{AA'}=\dfrac{2M^{AB'}N^{A'C'}\pi_{[B'}\psi_{C']}}{\varepsilon^{D'E'}N_{D'}^{\ts F'}N_{E'}^{\ts G'}\pi_{F'}\psi_{G'}},
\end{equation}
where the matrices are defined by equation \eqref{eq:mndef}.  This equation has the same form as equation \eqref{eq:yfirst}, so it can be simplified in the same way, yielding equation \eqref{eq:sdtwo}.

\section{Details of biquaternion projective space}
 \label{sec:AppB}
 
 Here we will give some details about the construction of biquaternion projective space.  With indices, the space is defined by identifying the homogeneous points $[(b_0)^{AA'},...,(b_n)^{AA'}]$ with $[(b_0)^{AB'}u_{B'}^{\ts A'},...,(b_n)^{AB'}u_{B'}^{\ts A'}]$ for any invertible $u$.  In a coordinate patch $U_i$, we'd like to choose $u=b_i^{-1}$.  There is a technical difficulty with this choice, because $u$ has two primed indices, while $b_i$ has one primed and one unprimed index, which is a different type of matrix.  Stated differently, the formula $b_i^{AB'}(b_i^{-1})_{B'}^{\ts A'}=I^{AA'}$ does not really work, because the isotropic tensor $\delta^{A}_{B}$ is usually defined with two of the same indices (primed or unprimed), and has one index raised and one lowered.  The resolution to this issue is that we do not view the identity matrix $I^{AA'}$ used here as special, and in fact any invertible matrix could be used.  One can readily check that all the calculations would be equally valid if a different matrix was chosen, it would just define a coordinate patch that is easily related to the choice made using $I^{AA'}$.  For convenience we still use the notation $b^{-1}$ for this matrix, even though the choice is not invariant under coordinate transformations.
 
Some of the technicalities regarding indices come into play with the derivation of equation \eqref{eq:betapatch}.  This equation is derived by taking the derivative of $x=b_0 b_1^{-1}$ for $b_0$ fixed and $b_1$ constrained to the $\beta$-plane $w$, and using the identity
\begin{equation}
d(b^{-1})_{B'}^{\ts A'}=-(b^{-1})_{B'A}\, d(b^{AC'})(b^{-1})_{C'}^{\ts A'}.
\end{equation} 
It is clear from the indices here that the left inverse is not the same as the right inverse.  With explicit indices, equation \eqref{eq:betapatch} is \begin{equation}
x_w^{AA'}=x^{AA'}+b_0^{AB'}(b_1^{-1})_{B'}^{\ts B}\lambda_B\delta^{C'}(b_1^{-1})_{C'}^{\ts A'},
\end{equation}
 where we have used the spinor ``see-saw'' on the $B$ index.  Numerically, the inverse on the left is the same as the one with two primed indices, but could change under a coordinate transformation.  For notational clarity we have suppressed indices and ignored this subtlety, but it is useful to be aware of for detailed calculations.

\bibliographystyle{Science}
\bibliography{References}

\begin{thebibliography}{1}

\bibitem{baker2019frustrated}
E.~B.~B. III, Frustrated conformal transformations and holomorphic maps on
  ambitwistor space (2019).

\bibitem{ward1991twistor}
R.~Ward, R.~Wells, {\it Twistor Geometry and Field Theory\/}, Cambridge
  Monographs on Mathematical Physics (Cambridge University Press, 1991).

\bibitem{dunajski2010solitons}
M.~Dunajski, {\it Solitons, Instantons, and Twistors\/}, Oxford Graduate Texts
  in Mathematics (OUP Oxford, 2010).

\bibitem{10.2307/79638}
M.~F. Atiyah, N.~J. Hitchin, I.~M. Singer, {\it Proceedings of the Royal
  Society of London. Series A, Mathematical and Physical Sciences\/} {\bf 362},
  425 (1978).

\bibitem{Shaw_1986}
W.~T. Shaw, {\it Classical and Quantum Gravity\/} {\bf 3}, 753 (1986).

\bibitem{Dinh2010}
T.-C. Dinh, N.~Sibony, {\it Dynamics in Several Complex Variables:
  Endomorphisms of Projective Spaces and Polynomial-like Mappings\/} (Springer
  Berlin Heidelberg, Berlin, Heidelberg, 2010), pp. 165--294.

\end{thebibliography}

\end{document}